\newtheorem{theorem}{Theorem}[section]
\newtheorem{definition}[theorem]{Definition}
\newtheorem{claim}[theorem]{Claim}
\newtheorem{question}[theorem]{Open Question}
\newcommand{\ignore}[1]{}
\newcommand{\col}{\mathsf{col}}
\newcommand{\mult}{\mathsf{mult}}
\title{Combinatorial lower bounds for $3$-query LDCs}
\author{Arnab Bhattacharyya \thanks{National University of Singapore, Singapore. Email: \texttt{arnabb@nus.edu.sg}. This work was partially supported by NUS Startup Grant R-252-000-A33-133. } \and L.~Sunil Chandran \thanks{Indian Institute of Science, Bangalore. E-mail: \texttt{sunil@iisc.ac.in} } \and Suprovat Ghoshal\thanks{Indian Institute of Science, Bangalore. E-mail: \texttt{suprovat@iisc.ac.in}} }
\begin{document}

\maketitle
\thispagestyle{empty}

\begin{abstract}
	A code is called a {\em $q$-query locally decodable code} (LDC)  if there is a randomized decoding algorithm that, given an index $i$ and a received word $w$ close to an encoding of a message $x$, outputs $x_i$ by querying only at most $q$ coordinates of $w$. Understanding the tradeoffs between the dimension, length and query complexity of LDCs is a fascinating and unresolved research challenge. In particular, for $3$-query binary LDC's of dimension $k$ and length $n$, the best known bounds are: $2^{k^{o(1)}} \geq n \geq \tilde{\Omega}(k^2)$.
	
	In this work, we take a second look at binary $3$-query LDCs. We investigate a class of 3-uniform hypergraphs that are equivalent to {\em strong} binary 3-query LDCs. We prove an upper bound on the number of edges in these hypergraphs, reproducing the known lower bound of $\tilde{\Omega}(k^2)$ for the length of strong $3$-query LDCs.  In contrast to previous work, our techniques are purely combinatorial and do not rely on a direct reduction to $2$-query LDCs, opening up a potentially different approach to analyzing 3-query LDCs.
\end{abstract}

\section{Introduction}

A code $\mathcal{C}$ is said to be a {\em $q$-query locally decodable code (LDC)} if it is possible to recover any symbol $x_i$ of a message $x$ by querying $\mathcal{C}(x)$ on at most $q$ locations, such that even if a constant fraction of $\mathcal{C}(x)$ is corrupted, the decoder returns $x_i$ with high probability. LDCs already appeared in the PCP literature (e.g., implicitly in \cite{BFLS}) but they were first explicitly formulated by Katz and Trevisan in \cite{KT00}. 
LDCs have attracted attention not only because of their immediate relevance to data transmission and data storage but also because of their surprising connections to complexity theory and cryptography (\cite{CGKS98, STV01, DS07,KS09}). In more recent years, the analysis of LDCs has led to a greater understanding of basic problems in incidence geometry, the construction of design matrices and the theory of matrix scaling, e.g. \cite{BDWY11, DSW12, DSW14}. 

Although LDCs have been studied now for two decades, some basic questions remain stubbornly open. In particular, we have the following open question for 3-query LDCs:
\begin{question}
	What is the length of the shortest $3$-query LDC that can encode all $k$-bit binary messages?
\end{question}
A wild variety of techniques have been used to study the problem. For a while, it was believed that the length  $n$ should be exponential in $k$ for $3$-query LDCs (indeed, for any constant number of queries). This belief was shattered by a breakthrough work of Yekhanin that designed $3$-query LDCs of length subexponential in $k$ (conditional on some number-theoretic conjectures). Subsequent work (\cite{Efr12, DGY11} reformulated the construction in terms of {\em matching vector codes} and established an unconditional upper bound of $\exp\left(\exp(O(\sqrt{\log k \log \log k}))\right) = \exp(k^{o(1)})$ on the length.  

As for lower bounds on the length of $3$-query LDCs, which is the focus of this work, Katz and Trevisan~\cite{KT00} first gave a super linear lower bound of $\Omega(k^{3/2})$, which was then improved to $\Omega\left(k^2/(\log k)^2\right)$ by Kerenedis and de Wolf~\cite{KdW03} using quantum information theoretic techniques. The current state-of-the-art is due to Woodruff \cite{Woo07} from over a decade ago where he showed that $n \geq {\Omega}(k^2/\log k)$. 

Given the state of affairs, it is natural to try to prove lower bounds for stronger variants\footnote{For instance, Woodruff in \cite{Woo12} gave an $\Omega(k^2)$ lower bound for the special case of {\em linear} $3$-query LDCs.
} of LDCs where the task should be easier. In this work, we study a restricted form of LDCs which seem to capture most of the challenges associated with general LDCs. 
\begin{definition}\label{def:strldc}
	For a given $\delta \in (0,1)$, 
	a code $\mathcal{C}: \{\pm 1\}^k \to \{\pm 1\}^n$ is
	a {\em $(3,\delta)$-strong LDC} if for every $i \in [k]$, there exists a set $M_i$ of $\geq \delta n$ disjoint triples in ${[n] \choose 3}$ such that for every $x \in \{\pm 1\}^k$ and for every triple $(j_1, j_2, j_3) \in M_i$, $x_i = \mathcal{C}(x)_{j_1}\cdot \mathcal{C}(x)_{j_2}\cdot \mathcal{C}(x)_{j_3}$. Moreover, if $i \neq i'$, a triple in $M_i$ intersects a triple in $M_{i'}$ in at most 1 coordinate. 
\end{definition}
Known constructions of 3-query LDCs are strong. Conceptually, the main\footnote{The decoding scheme of taking the product (xor) of the codeword bits is without loss of generality (see \cite{Woo07}). The additional condition in \Cref{def:strldc} about triples in different matchings intersecting only at single coordinates is made for technical convenience and should be avoidable.} restriction that the above definition makes is that each triple in the matching $M_i$ successfully decodes $x_i$ for {\em every} $x$. On the other hand, Katz and Trevisan \cite{KT00} show that general LDCs yield matchings $M_1, \dots, M_k$ such that each triple in the matching $M_i$ sucessfully decodes $x_i$ for most (not all) $x$'s.

We show a combinatorial proof of the known $\Omega(k^2/\log k)$ lower bound for  the length of code words of $3$-query strong LDCs.  Here is the main theorem stating the lower bound.

\begin{theorem}				\label{thm:main}
	Let $C:\{\pm 1\}^k \mapsto \{\pm 1\}^n$ be a $(3,\delta)$-strong LDC. Then, $n \ge \Omega_\delta(k^2/\log k)$.
\end{theorem}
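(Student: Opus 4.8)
The plan is to replace the decoding condition by a self-contained combinatorial structure and then prove an extremal bound on that structure.

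First I would pass to a colored hypergraph. Write $f_j\colon\{\pm1\}^k\to\{\pm1\}$ for $f_j(x)=\mathcal{C}(x)_j$ and regard each $f_j$, as well as each character $\chi_i(x)=x_i$, as a vector in $\mathbb{F}_2^{2^k}$, so that pointwise multiplication becomes addition; crucially $\chi_1,\dots,\chi_k$ are $\mathbb{F}_2$-independent. The decoding condition becomes $f_{j_1}+f_{j_2}+f_{j_3}=\chi_i$ for every $\{j_1,j_2,j_3\}\in M_i$. Let $H$ be the $3$-uniform hypergraph on $[n]$ with edge set $\bigcup_i M_i$, the edges of $M_i$ receiving color $i$. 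Then \Cref{def:strldc} makes $H$ \emph{linear} (two distinct edges share at most one vertex) and makes each color class a matching of size $\ge\delta n$. The algebra contributes one extra, purely combinatorial, constraint: if $S\subseteq E(H)$ has every vertex of even degree, then $\sum_{T\in S}\chi_{\col(T)}=\sum_{T\in S}\sum_{j\in T}f_j=0$, so by independence of the $\chi_i$ every color appears in $S$ an even number of times. Equivalently, dualizing the $\mathbb{F}_2$ incidence matrices, there is a labeling $\beta\colon[n]\to\mathbb{F}_2^k$ with $\beta(a)+\beta(b)+\beta(c)=e_{\col(\{a,b,c\})}$ for every edge $\{a,b,c\}$. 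Conversely any such hypergraph arises from a strong LDC, so it suffices to bound $n$ in terms of $k$ for these objects.

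It then suffices to prove the edge bound $|E(H)|=O_\delta(n^{3/2}\sqrt{\log n})$. Indeed, since $|E(H)|=\sum_i|M_i|\ge\delta nk$ this forces $k=O_\delta(\sqrt{n\log n})$, hence $k^2=O_\delta(n\log n)$; and as we may assume $n\le k^2$ (otherwise the conclusion is immediate) we get $\log n\le 2\log k$ and therefore $n\ge\Omega_\delta(k^2/\log k)$, which is \Cref{thm:main}.

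The edge bound is the crux, and it genuinely needs both halves of the structure. A linear $3$-uniform hypergraph can have $\Theta(n^2)$ edges and decompose into $\Theta(n)$ matchings each of size $\Theta(n)$ --- e.g.\ a Kirkman triple system --- so the parity constraint is essential; yet the parity constraint alone, via the $\mathbb{F}_2$-rank of the incidence matrix, yields only the trivial $n\ge k$. The idea is to exploit the sub-configurations that parity forbids. The smallest sub-hypergraph all of whose vertex degrees are even is the ``triangles of $K_4$'' configuration ($4$ edges on $6$ vertices, each vertex in exactly two edges, every two edges meeting); by linearity its four edges carry four \emph{distinct} colors, each occurring once, so parity forbids it --- and, more generally, forbids every even sub-hypergraph whose multiset of colors is not all-even. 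One then wants to show that $k\gg\sqrt n$ forces many such configurations: a double count gives $\sum_{v}\binom{d(v)}{2}=\Omega_\delta(nk^2)$ pairs of distinct, necessarily differently-colored, edges through a common vertex, and from such ``bowties'' one builds many short closed walks / even sub-hypergraphs; the task is to argue that a constant fraction of them must violate color-parity, using the labeling $\beta$ to control how the $k$ large matchings interlock. This last quantitative step --- converting ``many forced short even configurations'' into ``many forced parity violations'', with the $\log$ factor presumably entering through a dyadic union bound over the popular values of $\beta$ --- is where I expect the real difficulty to lie, and is presumably where the paper's hypergraph counting does its work.
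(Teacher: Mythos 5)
You set up the problem exactly as the paper does: pass to the colored linear $3$-uniform hypergraph $H=\bigcup_i M_i$, observe that any even subhypergraph must have every color class even (the paper's Claim~\ref{cl:even-subgraph-color}), note the equivalent labeling $\beta\colon[n]\to\mathbb{F}_2^k$ with $\beta(a)+\beta(b)+\beta(c)=e_{\col(\{a,b,c\})}$, and reduce the theorem to the edge bound $k=O_\delta(\sqrt{n\log n})$. Your cherry count $\sum_v\binom{d(v)}{2}=\Omega_\delta(nk^2)$ is also the paper's starting point (Claim~\ref{cl:edges-bound}, where it becomes a lower bound on $|E(G)|$ for the signature graph). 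Up to there you are on track.

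But you then stop at the crux. Your final paragraph explicitly defers the step of turning ``many cherries'' into a forced parity violation, and the route you gesture at --- a dyadic union bound over popular values of $\beta$ --- is not viable as stated: the labels $\beta(v)$ live in $\mathbb{F}_2^k$ with $k\gg\log n$, so there is no useful pigeonhole on $\beta$-values, and nothing in the parity constraint localizes to a single ``bucket.'' The paper's actual mechanism is quite different and is the real content of the proof. It defines a \emph{signature graph} $G$ on $V(H)\times V(H)$ whose edges are the cherries, labeled by the unordered pair of colors of the two hyperedges; it proves the key structural bound $(\star)$ that for every vertex of $G$ and every single color, at most $4$ incident edges carry that color (Claim~\ref{cl:neighborhood-color-bound}); and it then passes to a subgraph $G'\subseteq G$ of minimum degree $\Omega(\log n)$ and grows a \emph{rainbow tree} $T_0\subseteq T_1\subseteq\cdots$ rooted at some $r$. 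Claim~\ref{cl:property-of-property} shows that any edge of $G'$ from a tree vertex back into the current tree must repeat a color on the root path (otherwise you close a cycle whose corresponding even subhypergraph of $H$ has a color of odd multiplicity, contradicting Claim~\ref{cl:even-subgraph-color}), so levels stay disjoint; combined with $(\star)$ and the min-degree bound, levels at least double until the tree exhausts $\Theta(\log n)$ levels, at which point a counting argument at the last two levels \emph{forces} a rainbow-closing edge, giving the contradiction. None of this --- the signature graph on pairs, the per-color degree bound $(\star)$, the rainbow-tree growth with its depth bound, and the final pigeonhole --- appears in your write-up, and it is precisely where the $\log n$ loss enters (as tree depth, not as a $\beta$-bucketing). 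So the proposal is a correct reduction plus an honest acknowledgment of a gap, not a proof.
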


\subsection{Proof Overview}
As we already noted, Theorem \ref{thm:main} follows from \cite{Woo07}. Of more interest is our proof technique. Woodruff's lower bound reduces $3$-query LDCs to $2$-query LDCs and applies known analytic proofs giving tight bounds for $2$-query LDCs \cite{KdW03}. On the other hand, our proof is purely combinatorial and does not seem to be a reduction to 2 queries. 

Our starting point is the observation that strong 3-query LDCs are equivalent to {\em even-colored 3-uniform hypergraphs}. A 3-uniform hypergraph is called {\em linear} if any two edges intersect in at most one vertex.
\begin{definition}\label{def:linh}
	An {\em $(n,k,\delta)$-even-colored 3-uniform hypergraph} is a linear edge-colored hypergraph $H$ on $n$ vertices with each edge having a color in $\{1,\dots, k\}$ such that:
	\begin{itemize}
		\item[(i)] For each $i \in [k]$, the edges of color $i$ form a matching of size at least $\delta n$, and
		\item[(ii)] If $H'$ is a subgraph of $H$ such that every vertex has even degree in $H'$, then there are an even number of edges of each color in $H'$. 
	\end{itemize}
\end{definition}

Given a $(3,\delta)$-strong LDCs, define the hypergraph $H$ which is the union of the matchings $M_1,\dots, M_k$ given by Definition \ref{def:strldc}, and let the color of an edge be the matching it comes from. Then, it is easy to check that both conditions (i) and (ii) are met (see \Cref{cl:even-subgraph-color}). The correspondence naturally goes in the other direction too, although this is not needed in the present work.

We prove an upper bound $k^2 \leq O_\delta(n \log n)$  for $(n,k,\delta)$-even-colored 3-uniform hypergraphs, proving our main theorem. To motivate our proof, let us sketch the corresponding argument for 2-query LDCs (which is also new to the best of our knowledge). Suppose we have a (2-uniform) graph which is the union of $k$ matchings, with edges from the $i$'th matching having color $i$. Analogously to condition (ii) of \Cref{def:linh}, also suppose that any cycle contains an even number of edges of each color. Then, we prove that the number of vertices $n$ is at least $\exp(k)$.  For simplicity, suppose the matchings of each color are perfect. Our argument is through coding (ironically!). Fix an arbitrary vertex $s$. For any vertex $v \neq s$, let its {\em signature} $S(v)$ be defined as $(n_1, \dots, n_k)$ where $n_i$ is the parity of the number of edges of color $i$ on a path $P$ from $s$ to $v$. We claim that $S(v)$ does not depend on the path chosen. This is because if two paths from $s$ to $v$ gave different signatures, this would yield a cycle in which some color occurred an odd number of times.  On the other hand, there are at least $2^k$ different signatures because for any signature $(s_1, \dots, s_k) \in \{0,1\}^k$, there is a path from $s$ with exactly $s_i$ edges of color $i$ (since the matchings are perfect). Hence, the number of vertices is at least $\exp(k)$.

\begin{figure}[b]
	\begin{center}
		\includegraphics[width=1.5in]{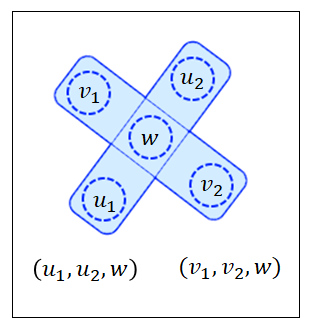}
	\end{center}
	\caption{A cherry formed from the edges $\{u_1, w, u_2\}$ and $\{v_1, w, v_2\}$ intersecting at $w$.}
	\label{fig:cherry}
\end{figure}

Our proof for 3-uniform hypergraphs is in a similar spirit. Instead of a path to define a signature, we use a sequence of {\em cherries}, borrowing an idea from \cite{dellamonica2012even}. A cherry is a pair of hyperedges which uniquely intersect at a hyperedge; see \Cref{fig:cherry}. We observe that if the number of edges is sufficiently large, then there are many cherries. We then use this structure to show that there are even subgraphs (i.e., subgraphs in which all vertices have even degrees) which have an odd number of edges of some color. Namely, we construct a `cycle of cherries' in which we know there is a color that appears on a unique edge, yielding the contradiction. More details follow.

Formally, given an $(n,k,\delta)$-even colored hypergraph $H$ which is a union of matchings $M_1, \dots, M_k$, define the {\em signature graph} $G = (V',E')$ as follows. The vertex set $V' = V \times V$, and there is an edge in $G$ between the vertices $(u_1,v_1)$ and $(u_2,v_2) \in V'$ whenever there exists a $w \in V$ such that $\{u_1,u_2,w\}$ and $\{v_1,v_2,w\}$ are hyperedges forming a cherry in $H$. Moreover, such an edge is labeled by the pair of colors $\{i_1, i_2\}$ if $\{u_1, u_2, w\} \in M_{i_1}$ and $\{v_1, v_2, w\} \in M_{i_2}$. The signature graph enjoys the following useful structural property: 
\begin{itemize}
	\item[]For any vertex $x$ in $G$ and for any color $i \in [k]$, there are at most $4$ edges incident to $x$ that have $i$ in their label.\hspace*{\fill}($\star$)
\end{itemize}
The proof of ($\star$) follows from the definition of $G$ in terms of cherries (see \Cref{cl:neighborhood-color-bound}) .

For the sake of contradiction, assume that $k \geq \sqrt{C n \log n}$ for some large constant $C$. This, along with a standard averaging argument, implies that there exists a large subgraph $G'$ of the signature graph with minimum degree at least $k^2n/4n^2 \geq (C/4)\log n$. Now fixing an arbitrary vertex $r \in V(G')$ as root, we iteratively grow a sequence of trees $T_1,T_2,\ldots,T_{\ell+1}$ using edges in $G'$, while maintaining the following {\em rainbow} condition: For any vertex $x \in V(T_i)$, no color appears more than once among the colors labeling the unique path from $r$ to $x$ in $T_i$. 

We explain how to construct $T_{i+1}$ from $T_i$ so that the above rainbow condition is met. Let $L_i$ denote the leaves of the tree $T_i$, and let $N_i$ denote the neighbors $y$ of vertices $x \in L_i$ so that the colors labeling the edge $(x,y)$ do not occur on the path from the root $r$ to $x$ in $T_i$. A short argument (\Cref{cl:property-of-property}) allows us to deduce that $N_i$ must be disjoint from $V(T_i)$, because otherwise, condition (ii) of \Cref{def:linh} is violated. Hence, the next tree $T_{i+1}$ can be built by letting $L_{i+1}=N_i$ and adding one edge from each vertex in $L_{i+1}$ to a vertex in $L_i$.

We continue this process until $|L_{\ell}|< 2|L_{\ell-1}|$ for some iteration $\ell$. We now sketch how to arrive at a contradiction. From the stopping criteria, we know that for every $i< \ell-1$ we have $|L_{i+1}| \geq 2|L_i| \geq 2^{i+1}$, and therefore the depth of the tree is at most $\ell  = O(\log n)$. Therefore, for any $x \in L_{\ell-1}$, the number of colors labeling the path from $r$ to $x$ is at most $O(\log n)$. From property ($\star$), we get that there are at least $(C/4) \log n - O(\log n) = C' \log n$ neighbors of $x$ that are in $L_{\ell}$ (for some other constant $C'$).  Since $|L_{\ell}| < 2 |L_{\ell-1}|$, there exists a vertex $w \in L_{\ell}$ with at least $(C'/2) \log n$ neighbors in $L_{\ell-1}$. Again, invoking property ($\star$), for $C'$ large enough, there will be a neighbor $w' \in L_{\ell-1}$ such that that the colors labeling $(w,w')$ do not appear among the $O(\log n)$ labels of the path from $r$ to $w$. From here, we can conclude that the unique path between $w$ and $w'$ in $T_{\ell}$ along with the edge $(w,w')$ forms a cycle in $G$ in which some color appears exactly once. This structure corresponds to a subgraph in $H$ that violates condition (ii) of \Cref{def:linh}.

In the rest of the paper, we present the argument formally with all the details. It is unclear currently how to extend the analysis to $q$-query LDCs or how to improve the analysis for $3$-query LDCs. But we remain hopeful that by looking at more intricate combinatorial structures than cherries, we can make some progress.

\section{Preliminaries}

In this section and later, we do not invoke the notion of  even-colored subgraphs, and we define objects directly in reference to strong 3-query LDCs.

Given a $(3,\gamma)$-strong LDC, we define the \emph{recovery hypergraph} $H$, where
$V(H) =[n]$ and $E(H):= \cup_{i \in [k]} M_i$ to be the $3$-uniform hypergraph which is the union of matchings $M_1,M_2,\ldots,M_k$. 
For any edge $e \in E(H)$, we say that the color of the hyperedge $e$ is $i$ if $e$ belongs to matching $M_i$. We use the notation $\col(e)$ to denote the  color of the
hyperedge $e$.  
We additionally assume that $H$ is linear, i.e.  no two hyperedges of $H$  intersect in more than one element. 

Let $L$ be a hypergraph. Then we define an {\em augmentation} $L'$ of $L$ as follows: $V(L') = V(L)$ and $E(L')$  is  a multiset where each member $e 
\in E(L')$ also belongs to $E(L)$ but can possibly have a higher multiplicity than the multiplicity of $e$ in  $E(L)$.   With respect to  a hypergraph $L$  where a hyperedge $e$ is allowed to have 
multiplicity greater than 1,  we denote by $\mult_L(e)$ the multiplicity of $e$  in $E(L)$. We may drop the subscript $L$, if the hypergraph under consideration is  clear from the 
context. Also for $v \in V(L),   \deg_L(v) := \sum_{e \in E(L): v \in e} \mult_L(e)$.  If for all $ v \in V(L), \deg_L(v)$ is even, then  $L$ is called an \emph {even  hypergraph}. 
We use $\col(L)$ to  denote the multiset of colors associated with edges in $E(L)$. 

If $L$ is an augmentation of  the recovery hypergraph $H$, for $1 \le i \le k$,  we define 
\[
\lambda_L(i) := \sum_{e \in E(L): \col(e) = i} \mult_L(e).
\]

~~~

\begin {claim} \label {cl:even-subgraph-color} 

Let $L$ be an augmentation of the recovery hypergraph $H$.  If $L$ is even, then  for $1 \le i \le k$, $\lambda_L (i)$ is even.

\end {claim}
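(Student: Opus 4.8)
We have an augmentation $L$ of the recovery hypergraph $H$, which is even (every vertex has even degree counting multiplicities). We need to show $\lambda_L(i)$ — the total multiplicity of color-$i$ edges — is even for every $i$.

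Why is this true? Each color $i$ corresponds to matching $M_i$, which gives the decoding identity $x_i = C(x)_{j_1} C(x)_{j_2} C(x)_{j_3}$ for each triple. The key is to use the codeword structure: take a "generic" $x$ and multiply all the identities.

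**Proof plan.**

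The plan is to exploit the LDC decoding identity by taking a product over all edges of $L$ with multiplicity. Fix a message $x \in \{\pm 1\}^k$ and consider its codeword $C(x) \in \{\pm 1\}^n$. For each hyperedge $e = \{j_1, j_2, j_3\} \in E(H)$ with $\col(e) = i$, the strong-LDC property gives
\[
x_i = C(x)_{j_1} \cdot C(x)_{j_2} \cdot C(x)_{j_3}.
\]
Now form the product of these identities over all edges of $L$, each taken $\mult_L(e)$ times. On the left-hand side we get $\prod_{i=1}^{k} x_i^{\lambda_L(i)}$, since color $i$ appears with total multiplicity $\lambda_L(i)$. On the right-hand side we get $\prod_{v \in V(L)} C(x)_v^{\deg_L(v)}$, since each vertex $v$ is covered by $\deg_L(v)$ edge-slots (counting multiplicities). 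Because $L$ is even, every exponent $\deg_L(v)$ is even, so the right-hand side equals $1$. Hence
\[
\prod_{i=1}^{k} x_i^{\lambda_L(i)} = 1 \qquad \text{for all } x \in \{\pm 1\}^k.
\]

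From this identity we conclude $\lambda_L(i)$ is even for each $i$. Indeed, reducing each exponent mod $2$, the identity says $\prod_{i : \lambda_L(i) \text{ odd}} x_i = 1$ for all $x$. If some $\lambda_L(i_0)$ were odd, pick $x$ with $x_{i_0} = -1$ and $x_j = 1$ for all other $j$ in the (finite) set $\{i : \lambda_L(i) \text{ odd}\}$; then the product equals $-1$, a contradiction. Therefore $\lambda_L(i)$ is even for every $i \in [k]$, as required.

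**Main obstacle.** There is essentially no deep obstacle here; the proof is a direct bookkeeping argument. The one point requiring minor care is the double-counting on the right-hand side: one must be careful that $L$ being an augmentation of $H$ means every edge of $L$ is a genuine hyperedge of $H$ (so the decoding identity applies to it), and that summing $\mult_L(e)$ over edges incident to $v$ genuinely gives $\deg_L(v)$ — this is exactly the definition of $\deg_L$. A second small point is that we only need the implication "$L$ even $\Rightarrow$ each $\lambda_L(i)$ even," not the converse, so we do not need the matchings to be perfect or anything about their sizes.
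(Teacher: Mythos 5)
Your proof is correct and is essentially the same as the paper's: both multiply the strong-LDC decoding identities $x_{\col(e)} = \prod_{j\in e} C(x)_j$ over all edges of $L$ with multiplicity, use evenness of $\deg_L(v)$ to make the codeword side collapse to $1$, and then specialize $x$ to a standard-basis-type vector to force each $\lambda_L(i)$ even. The only cosmetic difference is that you state the identity $\prod_i x_i^{\lambda_L(i)} = 1$ for all $x$ before specializing, whereas the paper fixes the particular $x$ with $x_i = -1$ at the outset.
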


\begin {proof}

Suppose for contradiction that there exists $i$, $1 \le i \le k$ such that  $\lambda_L (i)$ is odd.  Recall that the indices of the code word bits correspond to the 
vertices of the recovery hypergraph.
Let us assume that $y=(y_1, y_2, \ldots, y_n)$ is the code word of a message $(x_1, x_2, \ldots, x_k)$, where $x_i = -1$ and $x_j = 1$ for $j \ne i$.  For an edge
$e =\{a,b,c\}$ of the recovery hypergraph,  $Y^e = y_a.y_b.y_c$.  Now it is clear that  $\prod_{ e \in E(L)} Y^e = 1$ , since $L$ is an even augmentation of $H$.  

On the other hand, by definition of the recovery hypergraph, if $\col(e) = t$, then $Y^e = x_t$ for $1 \le t \le k$. Therefore $\prod_{e \in E(L)} Y^e =  \prod_{1 \le t \le k}  x_t^{\lambda_L(t)}$.
Clearly since for the selected message   $x_j =1$ for $j \ne i$, we infer that  $\prod_{e \in E(L)} Y^e = x_i^{\lambda_L(i)} = -1$, if $\lambda_L(i)$ is odd. This is a contradiction. 
We conclude that for $1 \le i \le k$, $\lambda_L (i)$ is even.

\end {proof}

\paragraph*{The Signature Graph.}   We define a graph called the {\em signature graph} $G$ as follows:  $V (G) = \{ (u,v) : u, v \in V(H); u \ne v \}$ and an edge exists between 
two vertices $(u_1,v_1)$ and $(u_2,v_2)$ of $G$   if and only if  $\{u_1,  v_1 \} \cap \{u_2, v_2 \} = \emptyset $ and  there exists a vertex $w \in V(H)$ such 
that $\{u_1,u_2,w\},\{v_1,v_2,w\} \in E(H)$. 
Note that  since the recovery  hypergraph $H$ is linear, if there exists an edge between two vertices $(u_1,v_1)$ and $(u_2,v_2)$, there is a unique vertex $w$ 
such that  $\{u_1,u_2,w\}, \{v_1,v_2, w\} \in E(H)$.  
We may say that the vertex $w$  \emph {causes}  the edge $((u_1,v_1), (u_2,v_2))$. Given an edge $e \in E(G)$, we define $T(e) = (\{u_1,u_2,w\}, \{v_1,v_2, w\})$ if $w$ causes the edge $e$.  We may abuse
the notation and use $T(e)$ to denote the corresponding unordered set.  
We define   $\col({e}) = \{i_1,i_2\}$ if $(u_1,u_2,w) \in M_{i_1}$ and $(v_1,v_2,w) \in M_{i_2}$. Note that $i_1 \ne i_2$ since $w$ cannot be in two different
edges of the same matching.

\begin{claim}					\label{cl:edges-bound}
	The number of edges in the signature graph $G$ is at least $12 \gamma^2 nk^2$. 
\end{claim}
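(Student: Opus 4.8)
The plan is to count, for each color $i \in [k]$, the number of edges of the signature graph $G$ whose label contains $i$, and then sum over all colors. Fix a color $i$. The matching $M_i$ consists of at least $\gamma n$ disjoint hyperedges. I want to bound from below the number of edges of $G$ that arise from a hyperedge $e = \{u_1, u_2, w\} \in M_i$ playing the role of one of the two hyperedges in a cherry. Such an edge of $G$ is created whenever there is a second hyperedge $\{v_1, v_2, w\}$ sharing the vertex $w$ with $e$; the resulting signature-graph edge joins $(u_1, v_1)$ to $(u_2, v_2)$ (and also $(u_1, v_2)$ to $(u_2, v_1)$, and the reversed pairs), provided the disjointness condition $\{u_1, u_2\} \cap \{v_1, v_2\} = \emptyset$ holds. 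So the first step is to count, for a fixed $e \in M_i$, how many hyperedges of $H$ meet $e$.

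First I would argue that most hyperedges of $H$ are vertex-disjoint from a fixed $e \in M_i$ in a controllable way. Each of the three vertices $u_1, u_2, w$ of $e$ lies in at most $k$ hyperedges of $H$ (one per color, since each $M_j$ is a matching), so at most $3k$ hyperedges of $H$ meet $e$. Since $|E(H)| = \sum_j |M_j| \geq \gamma n k$, at least $\gamma n k - 3k$ hyperedges are disjoint from $e$; more to the point, I want pairs of hyperedges sharing the vertex $w$ (to form a cherry at $w$ with $e$). The cleaner route: for each $e = \{u_1, u_2, w\} \in M_i$ and each vertex $z \in e$, count hyperedges through $z$. Summing degrees, $\sum_{z \in V(H)} \deg_H(z) = 3|E(H)| \geq 3\gamma n k$, so the average degree is at least $3\gamma k$. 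A standard averaging argument shows at least, say, half the vertices have degree $\geq \tfrac{3}{2}\gamma k$ (or I can just work with the sum directly and avoid thresholding). For a typical vertex $w$, the hyperedges through $w$ pair up to give roughly $\binom{\deg_H(w)}{2}$ cherries centered at $w$, and each such cherry contributes a constant number (up to $4$, accounting for the two ways to match endpoints and the two orderings) of edges to $G$, after discarding the $O(k)$ degenerate pairs that fail the disjointness condition. Summing $\binom{\deg_H(w)}{2}$ over $w$ and using convexity (Jensen) against $\sum_w \deg_H(w) \geq 3\gamma n k$ gives a bound of order $\gamma^2 n k^2$ cherries, hence order $\gamma^2 n k^2$ edges of $G$; tracking the constants carefully should yield the stated $12\gamma^2 n k^2$.

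The main obstacle I anticipate is bookkeeping rather than conceptual: I must make sure that (a) the disjointness requirement $\{u_1, u_2\} \cap \{v_1, v_2\} = \emptyset$ in the definition of $E(G)$ only removes a lower-order ($O(nk)$ or $O(nk^2)$ with a small constant, not $\Omega(\gamma^2 nk^2)$) number of candidate cherries — this is fine because two hyperedges through $w$ can share at most the single vertex $w$ by linearity, so $\{u_1,u_2\}\cap\{v_1,v_2\}=\emptyset$ automatically once the two hyperedges are distinct and linear, and the only genuinely bad case is when the two hyperedges are equal, which is already excluded; (b) I do not double-count edges of $G$ — each unordered edge of $G$ corresponds to a unique cherry (a unique pair of hyperedges and the vertex $w$ causing it), by linearity of $H$, as noted in the text right before the claim, so the map from cherries to $G$-edges is at worst constant-to-one in the reverse direction; and (c) the constant $12$ comes out after combining the factor from convexity, the factor $4$ from the endpoint/ordering multiplicity in passing from a cherry to $G$-edges, and the loss from averaging away low-degree vertices. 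I would set up the inequality $|E(G)| \geq \tfrac{1}{c}\sum_{w \in V(H)} \binom{\deg_H(w)}{2}$ for an explicit small constant $c$, apply $\sum_w \binom{\deg_H(w)}{2} \geq n\binom{3\gamma k}{2} \geq n \cdot \tfrac{(3\gamma k)^2}{4} = \tfrac{9}{4}\gamma^2 n k^2$ for $\gamma k$ large (handling small $\gamma k$ separately, where the claim is trivial), and choose the constants so the final bound is at least $12\gamma^2 n k^2$ — possibly I will need to be slightly more generous in the averaging step, e.g.\ restricting to the half of vertices of above-average degree, to make the arithmetic close cleanly.
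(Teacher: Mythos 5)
Your proposal is essentially the paper's argument: count cherries at each vertex via $\sum_{w} \binom{\deg_H(w)}{2}$, note that each cherry \emph{causes} four distinct edges of $G$ (so the right inequality is $|E(G)| \ge 4\sum_{w}\binom{\deg_H(w)}{2}$, a multiplication by $4$, not a division by some $c$ — the map from a cherry together with one of its four endpoint pairings to $E(G)$ is injective by linearity), and then apply Jensen/Cauchy--Schwarz against $\sum_{w}\deg_H(w) = 3m \ge 3\gamma nk$. The only place you drift from the paper is the final bookkeeping: restricting to above-average-degree vertices would cost you a factor and leave you short of $12$; the paper instead writes $4\sum_w\binom{d(w)}{2} = 2\sum_w\bigl(d(w)^2 - d(w)\bigr) \ge 2\bigl(9m^2/n - 3m\bigr) = (6m/n)(3m-n)$ and uses $m \ge n$ (valid since $\gamma k \ge 1$) to reach $12m^2/n \ge 12\gamma^2 nk^2$.
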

\begin{proof}
	Since each matching is of size at least $\gamma n$, the number of hyperedges $m$ in $H$ is at least $\gamma nk$.    It follows that  $\sum_{v \in V(H)} \deg(v) = 3m \ge  3 \gamma nk$. 
	For any vertex $w \in V(H)$  consider a pair of incident edges, say $\{u_1,u_2,w\}$ and $\{v_1,v_2,w\}$. Since $H$ is linear, $\{u_1,u_2\}  \cap \{v_1,v_2\} = \emptyset$. 
	It is easy to see that based on  this pair of incident edges, $w$  can \emph {cause}  4  distinct edges of the signature graph $G$.  Therefore the  vertex $w$  \emph {causes} 
	$4 { d(v) \choose 2}$ distinct edges of  $G$,
	where $d(v) =\deg(v)$.
	As we have mentioned earlier,  two different vertices $w$ and $w'$ cannot cause the same edge in $G$. Therefore  $|E (G) | \ge  4 \sum_{v \in V(H)} {d(v) \choose 2} = 2 \sum_{v \in V(H)} (d(v)^2 
	-d(v) )$.  Recall that $\sum_{v \in V(H) } d(v) = 3m$.  Using the Cauchy-Schwarz inequality,
	\footnote {For vectors $X,Y$, the Cauchy-Schwarz inequality states that $\|X\|\cdot\|Y\| \ge X^\top Y$.  Now take $X = (d(v_1), d(v_2), \ldots, d(v_n))^\top$, where
		$V(H) =\{v_1,v_2,\ldots,v_n\}$ and $Y=(1,1, \ldots,1)^T$ to get the
		required lower bound. }
	we get $\sum_{v \in V(H) } (d(v)^2) \ge \frac {9m^2}{n}   $.
	It follows that $|E(G)| \ge  \frac {6m}{n} (3m-n) \ge \frac {12m^2}{n} \ge 12 \gamma^2 n k^2 $.    Here we have used $m \ge \gamma n k$ and  $m \ge  n$, since $\gamma k \ge 1$. 
	
\end{proof}

For a subgraph $J$ of the signature graph $G$, we define $H_J$ to be the augmentation of $H$, with $V(H_J) = V(H)$ and $E(H_J) = \cup_{e \in E(J)} T(e) $. Note that when we take
the union here, we retain multiple copies of a hyperedge if that hyperedge appears in multiple sets $T(e)$  taking part in the union operation. Thus $E(H_J)$ is by definition a multi-set. 
We extend  some of the notation used for hypergraphs to subgraphs of signature graphs also in the following way:   We use $\col(J)$ to denote the multiset $\col(H_J)$. 
A hypergraph $H'$ is {\em rainbow colored} with respect to an edge coloring if there exist no two  hyperedges having the same color. (In particular, there will not be any hyperedge with multiplicity greater than 1.)
A subgraph $J$ of the signature graph is rainbow colored, if $H_J$ is rainbow colored.  We may also say $J$ (or $H$) is \emph {rainbow}, shortening the phrase  \emph {rainbow colored}. 

~~~~~

\begin {claim} \label  {cl:even-subgraph}

Let $J$ be an even subgraph of the signature graph $G$, i.e.  $ \forall v \in V(J)$, $\deg(v)$ is even.  Then $H_J$ is an \emph {even} augmentation of $H$.

\end {claim}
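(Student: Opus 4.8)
The plan is to fix an arbitrary vertex $z \in V(H)$ and prove that its degree $\deg_{H_J}(z)$ is even; since $z$ is arbitrary, this is exactly the assertion that $H_J$ is an even augmentation of $H$. By definition $E(H_J)$ is the multiset $\bigcup_{e \in E(J)} T(e)$, and for an edge $e = ((u_1,v_1),(u_2,v_2)) \in E(J)$ caused by $w$ we have $T(e) = \{\{u_1,u_2,w\},\{v_1,v_2,w\}\}$, which is a set of two distinct hyperedges (the pairs $\{u_1,u_2\}$ and $\{v_1,v_2\}$ are disjoint and nonempty). Hence, counting incidences with multiplicity, $\deg_{H_J}(z) = \sum_{e \in E(J)} c_e(z)$, where $c_e(z) \in \{0,1,2\}$ is the number of hyperedges in $T(e)$ that contain $z$.

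The first step is a short case analysis, using linearity of $H$, to pin down the parity of $c_e(z)$ for a fixed $e$. Since $\{u_1,u_2,w\}$ and $\{v_1,v_2,w\}$ each consist of three distinct vertices, $w \notin \{u_1,u_2,v_1,v_2\}$; and since $H$ is linear, these two distinct hyperedges intersect exactly in $\{w\}$. Therefore: if $z = w$ then $c_e(z) = 2$; if $z \in \{u_1,u_2,v_1,v_2\}$ then $z \neq w$ and $z$ lies in exactly one of the two hyperedges, so $c_e(z) = 1$; and otherwise $c_e(z) = 0$. Now observe that $z \in \{u_1,u_2,v_1,v_2\}$ holds precisely when $z$ appears as one of the two coordinates of an endpoint $(u_1,v_1)$ or $(u_2,v_2)$ of $e$ in $G$, and that $z$ can appear in at most one of these two endpoints because $\{u_1,v_1\} \cap \{u_2,v_2\} = \emptyset$ for every edge of the signature graph. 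Consequently $c_e(z) \equiv d_e(z) \pmod 2$, where $d_e(z) \in \{0,1\}$ denotes the number of $G$-endpoints of $e$ that have $z$ as a coordinate.

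It then remains to sum over the edges of $J$. Let $S_z := \{(z,y) : y \in V(H),\, y \neq z\} \cup \{(y,z) : y \in V(H),\, y \neq z\}$ be the set of vertices of $G$ that have $z$ as a coordinate. The quantity $\sum_{e \in E(J)} d_e(z)$ simply counts edge–vertex incidences in $J$ between edges of $J$ and vertices in $S_z$, so $\sum_{e \in E(J)} d_e(z) = \sum_{p \in S_z} \deg_J(p)$. Because $J$ is even, each summand $\deg_J(p)$ is even, hence the total is even. Chaining the congruences, $\deg_{H_J}(z) \equiv \sum_{e \in E(J)} c_e(z) \equiv \sum_{e \in E(J)} d_e(z) \equiv 0 \pmod 2$, as desired.

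The only delicate point is the parity computation for $c_e(z)$, and within it the fact that $z$ can be a coordinate of at most one of the two $G$-endpoints of an edge of $G$: this is exactly where the linearity of $H$ (and the disjointness requirement built into the definition of the signature graph) is used, and it is what makes the per-edge contributions assemble into $\sum_{p \in S_z}\deg_J(p)$. The rest is routine multiset bookkeeping.
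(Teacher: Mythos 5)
Your proof is correct and follows essentially the same approach as the paper's: you separate the contributions to $\deg_{H_J}(z)$ according to whether $z$ plays the role of the causing vertex $w$ (contributing $2$, hence always even) or the role of a coordinate of an endpoint of an edge of $J$ (contributing $1$), and then you observe that the latter contributions sum to $\sum_{p} \deg_J(p)$ over the $G$-vertices containing $z$, which is even because each term is. This mirrors the paper's distinction between the ``intermediate vertex'' and ``signature vertex'' roles; your write-up is a bit more explicit about the double-counting bookkeeping via $c_e(z)$, $d_e(z)$, and $S_z$, but it is the same argument.
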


\begin {proof}

Recall that each edge $e = ( (u_1,v_1), (u_2,v_2)) \in E(G)$  corresponds to exactly  $2$ edges   in $E(H_J)$, namely the two edges of $T(e) = $ $ \{ \{u_1,u_2,w\},$ 
$ \{v_1,v_2, w\} \}$, where $w$ is
the  unique vertex which \emph {caused}  the edge $e$. We say that $w$ appears in the role of an intermediate vertex and $u_1,v_1,u_2$ and $v_2$ appear in the role of signature vertices in $T(e)$. 
It is easy to see that since in $T(e)$ itself the $degree(w)$ is even, each vertex plays the role of an intermediate vertex an even number of times.   Noting that  in $T(e)$ each vertex
appears in the role of a signature vertex exactly once,  it is easy to see that if $x =(u,v)$ is a vertex of $J$, then  $u$ (also $v$) plays the role  of a signature vertex in
$\cup_{e \in E_J(x)} T(e)$   (where $E_J(x)$ denotes the set  of edges incident on $x$ in $J$) exactly $\deg_{J} (x)$ times. 
Since $\deg_J(x)$ is even, it follows that $\deg_{H_J} (u)$ and $\deg_{H_J} (v)$ are even numbers. 
\end {proof}

\begin {claim}  \label {cl:neighborhood-color-bound}

Let  $x$ be a vertex of the signature graph $G$  and let $E(x)$ be the set of edges incident on $x$ in $G$.  Let ${\cal C} \subseteq [k]$ be a subset of  colors.  Let 
$E (x,{\cal C}) =  \{e \in E(x) :    \col(e) \cap {\cal C} \ne   
\emptyset  \}   $. 
Then $|E(x,{\cal C})| \le  4 | {\cal C} | $.

\end {claim}

\begin {proof}

Let $x = (u,v)$.  For an edge $e \in E(x)$, $T(e)$ contains 2 hyperedges, exactly one of which contains $u$ and the other one contains $v$:
Let us denote by $T(e)_1$ and $T(e)_2$ the hyperedges in $T(e)$ that contain $u$ and $v$ respectively.  For $1 \le i \le k$, 
$E(x, {\cal C} )  = \cup_{i \in {\cal C} } E_1^i \cup E_2^i$, where $E_j^i = \{ e \in E(x) : \col(T(e)_j) = i\}$ for
$j =1,2$.  First we will show that $|E_1^i| \le 2$.  To see this, note that if  $\col(T(e)_1) = i$, then    $T(e)_1 \in M_i$ and $T(e)_1$ contains
$u$ as mentioned earlier. There is
a unique hyperedge with these properties since $M_i$ is a matching.  Let $T(e)_1 = (u,a,b)$. Then either $a$ or $b$  could have \emph {caused} the edge $e$.
If $a$ caused the edge $e$, then $T(e)_2$ contains  both $v$ and $a$, and there is a unique edge in $E(H)$  that is a superset of  $\{v,a\}$ since $H$ is linear.
Similarly if $b$ caused $e$, then $T(e)_2$ is uniquely determined, since it should contain both $v$ and $b$.  It follows that
$|E_1^i | \le 2$ for all $i$, $1 \le i \le k$.  A similar argument shows that  $|E_2^i| \le 2$.  
It follows that $|E(x, {\cal C} ) | \le 4 | {\cal C} |$.

\end {proof}

Now we are ready to prove Theorem \ref{thm:main}.

\section{Proof of Theorem \ref{thm:main}}

For contradiction, we shall assume that $k^2 >  C n\log n$ where $C = C(\gamma)$ is a sufficiently large constant. We can then lower bound the
average degree of $G$ as follows. From claim \ref{cl:edges-bound}, we know that $|E(G)| \geq 12 \gamma^2 nk^2$. On the other hand, the number of vertices in $G$ is at most $n^2$. Therefore, for $k^2 \ge C n\log n$, for $C$ large enough,  the average degree of $G$
is at least $(2C') \log n$ so that we can find a subgraph $G' \subseteq  G$ with minimum degree at least $C' \log n$, where $C'$ is a sufficiently large constant. 

Now we fix a vertex $r \in V(G')$, and we grow a rainbow tree rooted at $r$ in $G'$ level by level as follows. 
Let $T_0$ be the tree consisting only of the root $r$ and  $T_1$ be the tree consisting  of $r$ and all its neighbors.  At the $i$th stage, 
we will have a tree $T_i$ where $V(T_i) = \dot \cup_{i=0}^{i} L_i$ where $L_i$ is the set of  vertices in level $i$. Note that $V_0 = L_0 = \{r\}$ and
$T_1$ consists of 2 levels, $L_0 = \{r\}$ and $L_1 = N_{G'} (r)$.    For two vertices $x$ and $y$, the unique path in $T_i$ from $x$ to $y$ will be denoted by $P(x,y)$.  

Moreover at the $i$th stage, we will make sure that the tree $T_i$ satisfies the following property:

\begin {equation}  \label  {eq:property-of-tree}
\mbox {For any vertex }   x \in V(T_i),   P(r,x)  \mbox { is a rainbow path.} 
\end {equation}

\begin {claim} \label{cl:property-of-property}
If $T_i$ satisfies property (\ref {eq:property-of-tree}), and if $e=(x,y)$ is an edge of $G'$ such that $x \in L_i$ and $y \in L_j$, where $i \ge j$, 
then  $\col(e) \cap \col(P(r,x)) \ne \emptyset$. 
\end {claim}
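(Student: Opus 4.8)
The plan is to argue by contradiction: assuming $\col(e)\cap\col(P(r,x))=\emptyset$, I will build an even subgraph of $G$ whose associated augmentation of $H$ has some color appearing an odd number of times, contradicting \Cref{cl:even-subgraph-color} via \Cref{cl:even-subgraph}. First, dispose of the case where $e$ is itself an edge of $T_i$: since $T_i$ is grown level by level, every tree edge joins two consecutive levels, and as $i\ge j$ this forces $y$ to be the parent of $x$, so $e$ lies on $P(r,x)$ and $\col(e)\subseteq\col(P(r,x))$ — a contradiction. Hence we may assume $e\notin E(T_i)$.

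Let $z$ be the least common ancestor of $x$ and $y$ in $T_i$. Since $x\ne y$ and $i\ge j$, one checks $z\ne x$; and since $e$ is not a tree edge while the two tree paths $P(z,x)$ and $P(z,y)$ meet only at $z$, the union $\mathcal O := P(z,x)\cup\{e\}\cup P(z,y)$ is a cycle in $G'\subseteq G$, with edge set the disjoint union $E(P(z,x))\sqcup\{e\}\sqcup E(P(z,y))$. Every vertex of $\mathcal O$ has degree $2$, so $\mathcal O$ is an even subgraph of $G$; by \Cref{cl:even-subgraph} the augmentation $H_{\mathcal O}$ is even, so by \Cref{cl:even-subgraph-color} every color occurs an even number of times in the multiset $\col(\mathcal O)=\col(H_{\mathcal O})$. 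Writing $\col(e)=\{a,b\}$ with $a\ne b$ (as noted in the definition of $\col$ on $G$), the disjointness above gives $\col(\mathcal O)=\col(P(z,x))\uplus\{a,b\}\uplus\col(P(z,y))$.

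Now I invoke the rainbow property~(\ref{eq:property-of-tree}). The paths $P(z,x)$ and $P(z,y)$ are final segments of the rainbow paths $P(r,x)$ and $P(r,y)$, so $\col(P(z,x))$ and $\col(P(z,y))$ are themselves sets, and $a,b\notin\col(P(z,x))\subseteq\col(P(r,x))$. Comparing the multiplicity of each color in $\col(\mathcal O)$ against evenness then forces, in turn: $a,b\in\col(P(z,y))$; next $\col(P(z,x))\subseteq\col(P(z,y))$; and finally $\col(P(z,y))\subseteq\col(P(z,x))\cup\{a,b\}$ — each failure would leave some color with multiplicity exactly $1$. Hence $\col(P(z,y))=\col(P(z,x))\sqcup\{a,b\}$. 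Since each $T(e')$ consists of two hyperedges, a rainbow path $P$ satisfies $|\col(P)|=2|E(P)|$, so the last identity gives $|E(P(z,y))|=|E(P(z,x))|+1$. But $|E(P(z,x))|=i-\ell$ and $|E(P(z,y))|=j-\ell$, where $\ell$ is the level of $z$, whence $j=i+1$, contradicting $i\ge j$. This completes the argument.

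I expect the crux to be the final step rather than the cycle construction: the plain cycle-parity argument does \emph{not} close, since if both colors of $e$ already occur on the sibling branch $P(r,y)$ then every color of $\mathcal O$ can be even. The real content is that this near-miss is rigid — $\col(P(z,y))$ must be exactly $\col(P(z,x))$ together with the two colors of $e$ — and that this rigidity makes the branch through $y$ one edge longer than the branch through $x$, which is impossible because $y$ lies at level $j\le i$.
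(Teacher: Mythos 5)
Your proof is correct and follows the paper's approach: assume $\col(e)\cap\col(P(r,x))=\emptyset$, form the cycle $\mathcal O = P(z,x)\cup\{e\}\cup P(z,y)$ through the least common ancestor $z$, note it is an even subgraph of $G$, and invoke Claims \ref{cl:even-subgraph} and \ref{cl:even-subgraph-color} to produce a contradiction from a color of odd multiplicity. Where you diverge is only in how the odd color is extracted. The paper closes in one line by a cardinality count: $P(z,x)\cup\{e\}$ is rainbow, so $|\col(P(z,x)\cup\{e\})| = 2|E(P(z,x))|+2$, while $|\col(P(z,y))|\le 2|E(P(z,y))|\le 2|E(P(z,x))|$ because $j\le i$; hence some color of $P(z,x)\cup\{e\}$ is absent from $P(z,y)$ and therefore has multiplicity exactly one in $\mathcal O$. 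This is precisely the ``plain cycle-parity argument'' that your last paragraph claims does not close --- it does, so that worry is misplaced. Your detour through the rigidity identity $\col(P(z,y))=\col(P(z,x))\sqcup\{a,b\}$ followed by the length contradiction $|E(P(z,y))|=|E(P(z,x))|+1$ is also valid and ultimately rests on the same inequality $|E(P(z,y))|\le|E(P(z,x))|$ coming from $j\le i$, but it redoes the bookkeeping in three steps instead of one. (Your separate handling of the case $e\in E(T_i)$ is likewise unnecessary: a tree edge from $x$ to a lower level lies on $P(r,x)$, so $\col(e)\subseteq\col(P(r,x))$ already contradicts the hypothesis, meaning that case never arises inside the proof by contradiction.)
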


\begin {proof}

Suppose not. Let $z$ be the least common ancestor of $x$ and $y$ in $T_i$. Since $P(z,x) \subseteq P(r,x)$ is a rainbow path by property (\ref {eq:property-of-tree}), 
$P(z,x) \cup \{(x,y) \}$ is also rainbow. Since $|P(z,y)| \le  |P(z,x)|$, clearly  we have $|\col(P(z,x) \cup \{ (x,y) \} )| > |\col(P(z,y)|$. It follows
that  there is at least one matching color, say $i$ in the cycle $C = P(z,x) \cup \{(x,y) \} \cup P(y,z)$, such that $\lambda_{H_C} (i) = 1$. But
since $C$ is an even subgraph of $G$, $H_C$ is an even augmentation of $H$ by Claim  \ref {cl:even-subgraph}.  Then by Claim \ref {cl:even-subgraph-color}, 
$\lambda_{H_C} (i)$ should be an even number, a contradiction.

\end {proof}

Now we describe how to construct $T_{i+1}$ from $T_i$ by adding a new level $L_{i+1}$.
For a vertex $x \in L_i$ define $N'(x) = \{ y \in N_{G'}(x) :  \col(x,y) \cap  \col(P(r,x) ) = \emptyset \}$.  
Observe that $N'(x) \cap V(T_i) = \emptyset$: This follows from Claim  \ref {cl:property-of-property}, since if there is an edge in $G'$ from a vertex $x \in L_i$ to
a vertex $y \in L_j$, $j \le i$ then $\col(x,y) \cap \col(P(r,x) )\ne \emptyset$, and therefore $y \notin N'(x)$. 
Define $L_{i+1} = \cup_{x \in L_i} N'(x) $.  Clearly $L_{i+1} \cap V(T_i) = \emptyset$. 
If $L_{i+1} \ne \emptyset$, define  a bipartite graph  $B_i = (L_i, L_{i+1} )$  such that for  $x \in L_{i}$ and $y \in L_{i+1}$, $(x,y) \in E(B_i)$
if and only if $y \in N'(x)$.  Now for each $y \in L_{i+1}$, select one  vertex $y'$  from $L_i$ such that  $(y,y') \in E(B_i)$ to be its parent thus  obtaining the new tree $T_{i+1}$.
From the way we defined $N'(x)$ for $x \in L_i$,  it is clear that property (\ref {eq:property-of-tree}) is satisfied by $T_{i+1}$. 
If $|L_{i+1}| \ge 2 |L_i|$, we proceed to add the next  level.  Otherwise we stop the procedure and define the final tree $T$ to be $T_{i+1}$.

Let $L_\ell$ be the last level added to the tree. Clearly $\ell >  2$.  We observe that $\ell \le \log n + 1 $. Otherwise $|L_{\ell-1}| \ge 2^{t-1} > n$, since $|L_{i+1}| \ge 2 |L_i|$ for $i \le \ell-2$.
Now consider the bipartite graph $B_{\ell-1}$:  For each vertex $x \in L_{\ell-1}$,  we know by applying Claim  \ref {cl:neighborhood-color-bound}  with ${\cal C} = \col (P(r,x))$ that 
$E(x, {\cal C})   \le  4  |\col (P(r,x))| \le 8 \log n $.  But $|E_{G'}(x)| = \deg_{G'}  (x) \ge C' \log n$ and therefore  
$|N'(x)| \ge  (C'-8) \log n $.  Therefore  $B_{\ell-1}$ has at least  $|L_{\ell-1}|  (C'-8) \log n$ edges. Since $|L_\ell| < 2 |L_{\ell-1}|$, there exists a vertex $w \in L_\ell$
such that its degree in $B_{\ell-1}$ is at least $ (C'/2 - 4)  \log n$. Again by applying Claim  \ref {cl:neighborhood-color-bound}, this time
with ${\cal C} = \col (P(r,w))$, at most $ 8 (\log n + 1)$ of these edges can have a common color with  
any edge in  $P(r,w)$.  It follows that if $C$ is taken large enough, $w$ has a neighbor  $w' \in L_{\ell-1}$ such that $\col(w,w') \cap \col(P(r,w) ) = \emptyset$.  This contradicts
Claim \ref {cl:property-of-property} applied to the tree $T=T_\ell$. The situation is depicted in \Cref{fig:proof}. Thus we infer that
$k^2 \le  C  n \log n$, which in turn implies that $n = \Omega\left(\frac {k^2} {\log k } \right)$.  

\begin{figure}[t]
	\begin{center}
		\includegraphics[width=2.5in]{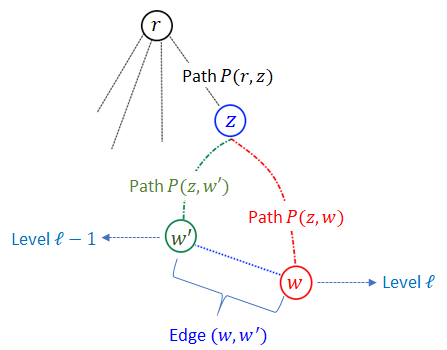}
	\end{center}
	\caption{The cycle formed by the concatenation of $P(z,w'), P(z,w),$ and $(w,w')$  corresponds to an even subgraph in $H$ with an odd number of edges having a particular color.}
	\label{fig:proof}
\end{figure}

\section*{Acknowledgments}
AB thanks Sivakanth Gopi, Nikhil Srivastava, and Luca Trevisan for many useful discussions about this problem. The authors would also like to thank the anonymous reviewers for their useful comments and suggestions.
\bibliographystyle{alpha}
\bibliography{main}

\end{document}